\definecolor{ffqqqq}{rgb}{1.,0.,0.}
\definecolor{ududff}{rgb}{0.30196078431372547,0.30196078431372547,1.}
\definecolor{burgundy}{rgb}{128, 0, 32}
\definecolor{dukeblue}{rgb}{0.0, 0.0, 0.61}
\definecolor{celestialblue}{rgb}{0.29, 0.59, 0.82}
\patchcmd{\section}{\scshape}{\bfseries}{}{}
\renewcommand{\@secnumfont}{\bfseries}
\renewenvironment{quote}{%
   \list{}{%
     \leftmargin0.75cm   
     \rightmargin0.5cm
   }
   \item\relax
}
{\endlist}
\newtheorem{theorem}{Theorem}
\newtheorem{proposition}{Proposition}
\theoremstyle{definition}
\newtheorem{remark}{Remark}
\newtheorem{example}{Example}
\newtheorem{definition}{Definition}
\renewenvironment{quote}{%
   \list{}{%
     \leftmargin0.75cm   
     \rightmargin0.5cm
   }
   \item\relax
}
{\endlist}
\begin{document}
\title{Complementarities in childcare allocation under priorities}
\author[Atay and Romero-Medina]{Ata Atay \and Antonio Romero-Medina}\thanks{Ata Atay is a Serra H\'{u}nter Fellow (Professor Lector Serra H\'{u}nter). Ata Atay gratefully acknowledges the financial support by the Spanish Ministerio de Ciencia e Innovaci\'{o}n through grant PID2020-113110GB-100/AEI/10.13039/501100011033, by the Generalitat de Catalunya through grant 2021SGR00306, and by the University of Barcelona through grant AS017672. Antonio Romero-Medina acknowledges the financial support by  Spanish Ministerio de Ciencia e Innovaci\'{o}n through grant AEI PID2020-118022GB-I0/AEI/10.13039/501100011033, and Comunidad de Madrid, Grants EPUC3M11 (V PRICIT) and H2019/HUM-589. We thank José Alcalde, Antonio Miralles, Jordi Mass\'{o}, and Matteo Triossi.\\ 
{\bf{Date: \today}}\\
\textbf{Atay:} Department of Mathematical Economics, Finance and Actuarial Sciences, and Barcelona Economic Analysis Team (BEAT), University of Barcelona, Spain. E-mail: \href{mailto:aatay@ub.edu}{\texttt{aatay@ub.edu}}. \\ \textbf{Romero-Medina:} Department of Economics, UC3M, Spain. E-mail: \href{mailto:aromero@eco.uc3m.es}{\texttt{aromero@eco.uc3m.es}}}

\begin{abstract}
We investigate the allocation of children to childcare facilities and propose solutions to overcome limitations in the current allocation mechanism. We introduce a natural preference domain and a priority structure that address these setbacks, aiming to enhance the allocation process. To achieve this, we present an adaptation of the Deferred Acceptance mechanism to our problem, which ensures strategy-proofness within our preference domain and yields the student-optimal stable matching. Finally, we provide a maximal domain for the existence of stable matchings using the properties that define our natural preference domain. Our results have practical implications for allocating indivisible bundles with complementarities.
\end{abstract}
\maketitle
\noindent \textbf{Keywords:} childcare allocation $\cdot$ complementarities $\cdot$ market design $\cdot$ stability $\cdot$ strategy-proofness 

\noindent \textbf{JEL Classification:} C78 $\cdot$ D47 $\cdot$ D61
$\cdot$ D63 $\cdot$ I21

\medskip

\hfill \emph{"Childcare is a textbook example of a broken market."}\footnote{Remarks by Secretary of the Treasury Janet L. Yellen on Shortages in the Child Care System (The U.S. Department of Treasury, 2021) reported in The Atlantic "The Case for Public Child Care" on January 5, 2023, \href{https://www.theatlantic.com/family/archive/2023/01/america-public-child-care-programs/672637/}{https://www.theatlantic.com/family/archive/2023/01/america-public-child-care-programs/672637/}. Accessed on: February 6, 2023.\nocite{dot21}}

\section{Introduction}

Childcare facilities provide parents an alternative to home care and facilitate children's cognitive skills development. In many countries, local governments offer and manage childcare services publicly. Still, these centralized systems are often met with dissatisfaction and can be a contentious political issue (see, for instance, \citealp{kk20}). Our motivation is the childcare allocation of the Community of Madrid (one of the 17 autonomous communities in Spain), which reflects the main features of childcare allocation procedures in Spain. 

Childcare is not mandatory in Spain, and the scarcity of available slots is well-documented. For instance, a survey conducted by the National Institute of Statistics in 2010 revealed that many mothers were compelled to quit their jobs or reduce their working hours due to inadequate childcare options. In the Community of Madrid, the enrollment rate in the academic year 2020-2021 was 18.3 \% for children under one year, 49.9 \% for children under two years, and 66.6 \% for children under three years. \footnote{Source: Datos y Cifras de la Educación 2020-2021. \href{http://www.madrid.org/bvirtual/BVCM050236.pdf}{http://www.madrid.org/bvirtual/BVCM050236.pdf\nocite{madrid21}}.	Accessed on May 27, 2022.}
The situation is not unique to Madrid. In France, public childcare slots frequently experience shortages, particularly in public \textit{crèche} facilities catering to children under three years of age  (\citealp{oecd}).

Parents tend to prefer enrolling their children in childcare at a later stage  (\citealp{mbg18, mle20}). However, they have incentives to secure a childcare slot as early as possible.\footnote{ Evidence of this behavior is frequent; we can find it in conversations with parents and the press. For example, \emph{``Only 11\% of applicants for slots in municipal schools have been admitted. More than 1000 children have been left without a slot at the schools in the Community of Madrid.''} \href{http://fuencarralelpardo.com/2021/09/10/3-600-ninos-sin-plaza-en-las-escuelas-infantiles-publicas-de-fuencarral-el-pardo/\%20}{http://fuencarralelpardo.com/2021/09/10/3-600-ninos-sin-plaza-en-las-escuelas-infantiles-publicas-de-fuencarral-el-pardo/\nocite{fuen21}}. Accessed on December 11, 2021.} There is a consensus among parents that: \emph{``If you want your child to enter a public childcare that you like, try to sign him/her up before (s)he is born.''} \footnote{\href{https://www.elconfidencial.com/espana/2017-05-28/educacion-temprana-acceso-guarderia-publica-lista-espera_1389553/}{https://www.elconfidencial.com/espana/2017-05-28/educacion-temprana-acceso-guarderia-publica-lista-espera\_1389553/\nocite{elcon}}. Accessed on December 11, 2021.} This is because securing a childcare slot grants absolute priority for enrollment in subsequent years, and the earlier parents apply, the higher their chances of obtaining one.

When students gain admission to the first year at a childcare facility, they acquire two objects: a slot for the current year and the highest priority claim for a slot in the following year. Typically, the preference for these two objects exhibits a strong positive correlation. In such cases, the allocation process remains unaffected by distortions. However, parents often demand more childcare slots than necessary for strategic reasons. Even if they do not intend to enroll their child in the first year, participating in the first-year allocation increases the likelihood of securing a slot for the subsequent year. The presence of a limited number of strategic applicants, combined with resource scarcity, can result in significant efficiency losses. 

This paper examines the distortion caused by history-dependent priorities in the context of childcare allocation. Following the minimalist market design approach of \cite{s23}, we aim to identify the aspects the current institution design fails to satisfy and provide alternatives that better fulfill the key objectives.  We adopt a two-period allocation process\footnote{We focus on a two-period allocation for simplicity, but our results can be extended to an arbitrary number of periods, as students retain priority at a school as long as they remain enrolled.} in which slots for the first period and priorities for the second period are allocated separately. We demonstrate that eliminating inter-temporal linkage can prevent efficiency loss as agents cannot manipulate their preferences to enhance their chances of securing a second-period allocation.

Following  \cite{kk05} and \cite{kkn09}, we expand individuals' preferences to include choices regarding childcare facilities during this period and priorities for the subsequent year. We assume that preferences for schools remain constant across both periods. Overcoming the challenge of allowing agents to exhibit strong intertemporal complementarities in their preferences over school slots while breaking the intertemporal allocation linkage, we enable agents to report strong complementarities between this year's school slots and next year's priority within the same childcare facility. This property is referred to as the willingness-to-remain property.

We adapt the student-proposing deferred acceptance ($SPDA$) of \cite{gs62} to our problem. We show that the adapted $SPDA$ ($aSPDA$) mechanism not only returns a stable matching within our preference domain (Theorem \ref{thm:stable}) but also remains strategy-proof for students (Theorem \ref{thm:sp}). Finally, we show that our preference domain is a maximal domain for the existence of stable matchings (Proposition \ref{pro:max_dom}).

Our findings contribute to the literature on childcare allocation problems in three significant ways: Firstly, our model differentiates between allocation and priority as distinct entities, capturing the dynamic nature of the agents' decision problem. Secondly, students are assigned in cohorts, and our expanded preference domain accommodates complementarities. Thus, our model is not encompassed by the many-to-many or matching with couples frameworks. Thirdly, we propose a stable and strategy-proof mechanism within the extended preferences domain and priority structure.
\subsection{Related literature}

The childcare allocation problem is dynamic, as children admitted to a childcare facility retain their slots until they start kindergarten. Unlike other countries, in Spain, childcare admission follows a cohort system similar to the school admission problem (see \citealp{as03}). Students within a cohort interact in the successive admission process. However, in the childcare admission process, unlike the school admission problem, parents determine when each child starts attending childcare. The closest paper to ours is \cite{kmt14}, which examines the allocation of children to public childcare facilities in Denmark. In their setting, admissions occur monthly, and children's priority depends on their age.

In contrast, our setting involves yearly cohorts, with no competition among children of different cohorts. While \cite{kmt19} studies the strategic behavior in a dynamic deferred acceptance (DA) mechanism under priority structures in Denmark, their model differs from ours as in \cite{kmt19}, children of different ages compete for available slots. Similarly, \cite{kk18} explores a dynamic matching model where agents interact over time, focusing on agents with preferences on both sides of the market. In contrast, in our model, schools are endowed with priorities. Furthermore, the properties introduced to guarantee the stability of matchings differ from ours. Other related studies include \cite{d22}, which defines dynamic stability in environments where not all agents can be matched simultaneously and matchings are irreversible; \cite{bc13}, which examines the long-run properties of assignment rules in a dynamic matching problem; \cite{al07}, which explores a dynamic house allocation problem with a common set of agents across all periods; \cite{k14}, which investigates the centralized housing allocation problem with overlapping generations of agents; \cite{p13}, which examines teacher allocation to public schools with seniority-based priorities and demonstrates strategy-proofness of the DA mechanism in their context; \cite{fetal20}, which studies efficient slot reallocation after school admission cancellations; and \cite{u10}, which extends centralized matching for kidney exchanges to dynamically evolving agent pools.

Moreover, our paper contributes to the literature on the existence of stable matchings with complementarities and peer effects (e.g., \citealp{dm97}; \citealp{p12}; \citealp{py22}). The paper by \cite{setal22} closely aligns with ours in this line of research. They address the childcare allocation problem with siblings in Japan, resembling the matching with couples model. They propose an algorithm based on integer programming that does not theoretically guarantee stable matching, but experimental results consistently yield stable matchings.

Finally, while our results primarily address the allocation of children to childcare facilities, they also hold relevance for related problems involving agents with complementarities for bundled goods (e.g., \citealp{b11}; \citealp{npv16}; \citealp{nv23}). By studying these applications, we can further explore the implications of our approach. 

The paper is organized as follows: Section \ref{sec:Model} introduces the model, Section \ref{sec:results} presents our results, and Section \ref{sec:Concluding-remarks} concludes.

\section{Model\label{sec:Model}}

In our model, there is a finite set of students $I=\left\{ i_{1},i_{2},\text{\ensuremath{\dots}},i_{n}\right\}$
and a finite set of childcare facilities that we call schools $S=\left\{ s_{1},s_{2},\text{\ensuremath{\dots}},s_{m}\right\}$. 
The students interact over two periods. Each student $i$ can be assigned at most one object at each period $t\in\left\{ 1,2\right\} $. Each school $s\in S$ admits a maximum number of students each period $t$. We denote the non-negative integer \textit{capacity} at time $t$ by $q_{s}^{t}$. The capacity of school $s\in S$ over the two periods is denoted by $q_{s}=(q_{s}^{1},q_{s}^{2})$. Let $q=(q^{1},q^{2})$ be the vector of first and second-period quotas for all schools.

Each student $i\in I$ has a strict, transitive, and complete preference relation $P_{i}$ over the set of schools and the possibility of not attending any school, denoted by $\emptyset$. If $sP_{i}\emptyset$, then school $s$ is acceptable to student $i$, if $\emptyset P_{i}s$ then school $s$ is unacceptable to student $i$. Let $P$ be the profile of preferences over schools for all students.

Students are interested in the school next period. This temporal dimension of the problem is usually modeled by offering the student absolute priority to remain in the same school next period. In this paper, we remove the entanglement between the \textbf{school} allocation (period one) and the \textbf{priority} for the next period (period two). In period one, student $i$ acquires (possibly) a \textbf{school} and an absolute \textbf{priority} for the same school in the subsequent period. \footnote{Student $i$ may acquire only an absolute priority for the second period without securing a school in the first period.} It is important to note that the allocation of slots for periods 1 and 2 will occur through separate allocation processes at the beginning of each school year. 

We  denote as $\succ_{i}^{1}$ the preferences of student $i$ for schools in the first period and $\succ_{i}^{2}$ the preferences for priority in the second period. It can be represented by a strict ordering of the elements in \textbf{$\mathcal{S}:=\left[\left(S\cup\emptyset\right)\times\left(S\cup\emptyset\right)\right]$}, \textbf{school-priority pairs}, as $\left(\succ_{i}=(\succ_{i}^{1},\succ_{i}^{2})\right)_{i\in I}$, overall possible combination of ordered pairs of school-priority. We call \textbf{extended preferences} to the student preferences over school-priority pairs and denote it by $\mathcal{P}$. To simplify notation, we denote a generic element of $\mathcal{S}$ by $\left(\sigma_{i},\sigma_{j}\right)$ where $\sigma_{i},\sigma_{j}\in S\cup \{\emptyset\} $. Let $\succ$ denote all students' preferences profile over school-priority pairs.  

In the remainder of the paper, we restrict the possible student preferences over pairs of school-priority to the following type we observe in the childcare allocation problem in our hands. First, we consider the families who do not want to send their children to a school in the first year.
 
\begin{definition}
\label{def:resp_pref-1-1}
Extended preferences of a student $i\in I$ are \emph{priority-only} if $P_{i}=\succ_{i}^{2}$ and $\succ_{i}^{1}:\emptyset$.
\end{definition}

We consider the presence of \textbf{complementarities} between school (period one) and priority (period two), and assume that students prefer to attend a school only if they are allocated a slot for period one and have priority for period two in the same school. Otherwise, they prefer to remain at home.

\begin{definition}
\label{ass1:compl-1}
Extended preferences of a student $i\in I$ are \emph{willingness-to-remain} if for all
$s_{p},s_{r}\in S$:\\
$(i)$ $(s_{p},s_{p})\succ_{i}(s_{p},s_{r})$ for all $s_{p}\neq s_{r}$,
\\
$(ii)$ $(s_{p},s_{p})\succ_{i}(s_{r},s_{r})$ for all $s_{p}P_{i}s_{r}P_{i}\emptyset$,
and\\
$(iii)$ $(\emptyset,\emptyset)\succ_{i}(s_{p},s_{r})$, for all $s_{p}\neq s_{r}$.
\end{definition}

\begin{remark}
\label{rem1:notation}
With a slight abuse of notation, we write $\succ_{i}$ to denote $i$'s preferences over individual schools and allocations of school-priority pairs whenever there is no ambiguity.
\end{remark}

The union of the extended preferences previously defined is the domain of preferences where we state our results. We formally define the domain of reasonably extended preferences in Definition \ref{domain}.

\begin{definition}
\label{domain}
Let $\mathbf{P}$ denote the domain of reasonably extended preferences. For any $\succ_{i}\in\mathcal{P}$ and $i\in I$, the preference $\succ_{i}$ satisfies priority-only or willingness-to-remain.
\end{definition}

Each school $s\in S$ has a priority ordering $\pi_{s}$ over students. The priority ordering of a school $s$, $\pi_{s}=(\pi_{s}^{1},\pi_{s}^{2})$, does not change between periods, i.e., it is the same to allocate, schools and priorities, $\pi_{s}^{1}=\pi_{s}^{2}$. Let $\pi$ denote the profile of priorities of schools over students.  We assume that the priorities of each school over sets of students are responsive to the priorities of individual students. Let $\pi_{s}$ be the priority ordering of school $s$ over students. We say that $\pi_{s}$ is a \emph{responsive priority ordering}for all $I'\subseteq I$ with $|I'|<q_{s}$, and $i,i'$ in $I\setminus I'$ if it satisfies \emph{(i}) $I'\cup\{i\}\pi_{s}I'\cup\{i'\}$ if and only if $i\pi_{s}i'$ and \emph{(ii)} $I'\cup\{i\}\pi_{s}I'$ if and only if $i$ is acceptable to school $s$. Given a period $t\in\{1,2\}$, the choice of a school $s\in S$, $Ch_{s}^{t}:2^{I}\rightarrow2^{I}$, is induced by its priority ordering $\pi_{s}^{t}$ and quota $q_{s}^{t}$, \emph{i.e.,} $i\in Ch_{s}^{t}(I)$ if and only if there exists no set of students $I'\subseteq I\setminus\{i\}$ such that $|I'|=q_{s}^{t}$ and $i'\pi_{s}^{t}i$ for $i'\in I'$. The tuple $(I, S,\succ,\pi,q)$ describes a childcare allocation problem.

A \emph{matching} $\mu=(\mu^{1},\mu^{2})$ is a mapping defined on the set $I\cup S$ such that $(\mu^{1}(i),\mu^{2}(i))=\mu(i)\in S\cup\{\emptyset\}\times S\cup\{\emptyset\}$ for every $i\in I$, $(\mu^{1}(s),\mu^{2}(s))=\mu(s)\in2^{I}\times2^{I}$ for every $s\in S$, and satisfies
\begin{enumerate}
\item[(i)] $i\in\mu^{t}(s)$ if and only if $s=\mu^{t}(i)$ for $t=1,2$,
\item[(ii)]  $\mu(i)=(s,s')$ if and only $i\in\mu^{1}(s)$ and $i\in\mu^{2}(s')$,
\item[(iii)] $\mu^{t}(i)=\emptyset$ means  student $i$ is unassigned under $\mu$
at the period $t$ and $\mu^{t}(s)=\emptyset$ means that school $s$
is unassigned under $\mu$ at the period $t$,
\item[(iv)] $\mu(i)=(\emptyset,s)$ if and only if $i\in\mu^{2}(s)$ and $\mu^{1}(i)=\emptyset$;
and $\mu(i)=(s,\emptyset)$ if and only if $i\in\mu^{1}(s)$ and $\mu^{2}(i)=\emptyset$.
\end{enumerate}
$\mathcal{M}$ denotes the set of  all matchings. A matching is \emph{individually rational} if for no student $i\in I$, $\emptyset\succ_{i}\mu^{t}(i)$ for any $t\in\{1,2\}$, and for all schools $s\in S$, $Ch_{s}^{t}(\mu^{t}(s))=\mu^{t}(s)$ for $t=1,2$. For a given matching $\mu=(\mu^{1},\mu^{2})$, blocking coalitions can be formed in different ways:
\begin{itemize}
\item $(i,(s,\mu^{2}(i)))\in I\times S\cup\{\emptyset\}$ blocks $\mu$
if $s\succ_{i}\mu^{1}(i)$, $i\in Ch_{s}^{1}(\mu^{1}(s)\cup\{i\})$,
\item $(i,(\mu^{1}(i),s))\in I\times S\cup\{\emptyset\}$ blocks $\mu$
if $s\succ_{i}\mu^{2}(i)$, $i\in Ch_{s}^{2}(\mu^{2}(s)\cup\{i\})$,
\item $(i,(s,s'))\in I\times(S\cup\{\emptyset\}\times S\cup\{\emptyset\})$ blocks $\mu$ if $s\succ_{i}\mu^{1}(i),s'\succ_{i}\mu^{2}(i)$, and $i\in Ch_{s}^{1}(\mu^{1}(s)\cup\{i\})$, $i\in Ch_{s'}^{2}(\mu^{2}(s')\cup\{i\})$ with the possibility of $s=s'$.
\end{itemize}

Given a matching $\mu$, if a student's assignment at a period does not change under another matching $\mu'$, we abuse the notation and represent the unchanged assignment under $\mu'$ by ``--''. For instance, given the assignment of student $i$ under matchings $\mu$ and $\mu'$ be $\mu(i)=(s,s')$ and $\mu'(i)=(s,s'')$. Since student $i$ is assigned to school $s$ at period one both at $\mu$ and $\mu'$, when no confusion arises, we will write $\mu'(i)=(-,s'')$.

A matching $\mu$ is \emph{stable} if any coalition does not block it and is individually rational. A mechanism $\phi$ is a function that maps preference profiles to matchings. The matching obtained by mechanism $\phi$ at the preference profile $\succ$ is denoted by $\phi(\succ)$, where $\phi_{l}(\succ)$ represents the assignment of agent $l\in I\cup S$. We say that a mechanism is \emph{strategy-proof} if there does not exist a preference profile $\succ$ and an agent $l\in I\cup S$, and a preference profile $\succ'$ of agent $l$ such that $\phi_{l}(\succ'_{l},\succ_{-l})\succ_{l}\phi_{l}(\succ)$.

\section{Results}
\label{sec:results}
We have extended students' preferences over schools to allow them to express the possibility of joint school next period without having to attend school in the first period. In our domain $\mathbf{P}$ of reasonably extended preferences we can establish an analogy with the matching with couples model and treat students with priority-only extended preferences as ``single'' applicants. In contrast, a student with willingness-to-remain extended preferences takes the role of a ``couple''. The respective employers will be the schools in the first or the second period. Notice that a crucial difference with the matching with couples problem is that the first and second periods are different objects; a student can not abandon a slot in the second period for a slot in the first period. Therefore, no substitution is possible among slots in different periods. Next, we introduce the adapted $SPDA$ for our domain.
\smallskip\\
\noindent \textit{Adapted SPDA (aSPDA)}.\\
\noindent \textsc{Step 1.} Run $SPDA$ algorithm for the sub-market consisting of students with extended preferences satisfying priority-only, schools take into account only such students in their priority lists and only quotas for the second period $q_2$. Let $M$ be the set of student-school pairs tentatively matched to each other. Let $\mu$ be a matching for the initial problems such that the pairs in $M$ are matched to each other and unmatched otherwise.\\



\noindent \textsc{Step 2.} Fix a random order over the students whose extended preferences satisfy willingness-to-remain. Following the fixed order and given matching $\mu$ defined at the end of Step 1, introduce students individually to the initial market by running $SPDA$.   Each student $i$ applies to her remaining top choice until either a school accepts her or all schools reject her. If another student $i'$ is evicted from her school, then assign student $i$ tentatively to this school and $i'$ applies to her remaining top choice until either a school accepts her or all schools reject her. Update the matching $\mu$ after introducing each student in the order by tentatively assigning students to schools that accept them or students to become unmatched following $SPDA$. 
\smallskip\\
$aSPDA$ runs until no rejected students want to apply to further schools. 

Note that in \textsc{Step 2} quotas for the first and the second periods should be respected. Hence, a student whose extended preference satisfies priority-only still can evict a student whose extended preference satisfies willingness-to-remain if she has a higher priority. Moreover, due to the complementarities we observe, the students with willingness-to-remain are willing to not participate in the first period unless they can participate in both periods. Hence, no student can fill a seat at a school without hurting other students, and the obtained matching is not wasteful.

Example \ref{ex:childcare} illustrates how $aSPDA$ is executed.

\begin{example}
\label{ex:childcare}
Consider a childcare allocation problem $(I,S,\mathcal{P},\pi,q)$ with six students $I=\{i_{1},i_{2},i_{3},i_{4},i_{5},i_{6}\}$, two schools $S=\{s_{1},s_{2}\}$ and both schools have a capacity of one student in both periods, $q^{1}(s_{1},s_{2})=q^{2}(s_{1},s_{2})=(2,2)$. Their extended preferences are as follows:

$i_{1}:(s_{1},s_{1}),(s_{2},s_{2})$;

$i_{2}:(s_{1},s_{1}),(s_{2},s_{2})$;

$i_{3}:(s_{2},s_{2}),(s_{1},s_{1})$;

$i_{4}:(s_2,s_{2}),(s_1,s_{1})$;

$i_{5}:(\emptyset,s_{2}),(\emptyset,s_{1})$;

$i_{6}:(\emptyset,s_{1}),(\emptyset,s_{2})$.

\noindent The priorities of the schools are as follows:

$s_{1}:i_{3},i_{4},i_1,i_{2},i_{6},i_{5}$;

$s_{2}:i_{6},i_{2},i_{5},i_{3},i_{4},i_{1}$.

Notice that there are four students ($i_{1}$, $i_{2}$, $i_{3}$, $i_{4}$) whose extended preferences satisfy willingness-to-remain. Then, the $SPDA$ algorithm is run for two students whose extended preferences are priority-only, and schools consider these students in their priority order while respecting their quotas only in the second period. It assigns $i_6$ to school $s_{1}$ and $i_5$ to school $s_2$ in the second period. Then, the tentative assignment for this sub-market is $\{(i_{5},(\emptyset,s_{2})),(i_{6},(\emptyset,s_{1}))\}$. 

Next, fix, for instance, the order $\rho=(i_{1}, i_{4},i_{3},i_{2})$ over students whose preference satisfies willingness-to-remain. First, student $i_1$ applies to her best choice $(s_1,s_1)$ and tentatively assigned. Second, $i_4$ enters and applies to her best choice $(s_2,s_2)$ and tentatively assigned. Third, $i_3$ enters to the market and starts a rejection chain: $i_3$ evicts $i_4$ from $s_2$ as $i_3 \pi_{s_{2}} i_4$. Then, $i_4$ applies to her remaining best choice $s_1$ and evicts $i_6$. As a result, $i_6$ evicts $i_3$ from $s_2$ and $i_1$ from $s_1$. Since $i_1$ has the lowest priority among $i_5,i_6$ at school $s_2$, $i_1$ becomes unmatched. Finally, $i_2$ enters the market and, after being rejected by $s_1$, applies to $s_2$. As a consequence, $i_5$ is evicted from $s_2$. Then, she is rejected by $s_1$ and becomes unmatched. Hence, the final matching is $\mu=\{(i_{1},(\emptyset,\emptyset)),(i_{2},(s_2,s_2)),(i_{3},(s_1,s_1)),(i_{4},(s_1,s_1)),(i_{5},(\emptyset,\emptyset)),(i_{6},(\emptyset,s_2)\}$.
\end{example}

\begin{table}[!h] 
\begin{center}
\begin{tabular}{|c|c|c|c|c|c|c|}
\hline 
$s_{1}^{1}$ & $s_{1}^{2}$ & $s_{2}^{1}$ & $s_{2}^{2}$ & $\emptyset$  & $\rho$ & entrant \tabularnewline
\hline 
\hline 
 & $i_{6}$ &  & $i_{5}$ &   &  & $i_5,i_6$ \tabularnewline
\hline 
$i_{1}$ & $i_{1},i_{6}$ &  & $i_5$ &   & $(i_1,i_4,i_3,i_2)$ & $i_{1}$ \tabularnewline
\hline 
$i_{1}$ & $i_{1},i_{6}$ & $i_4$ & $i_4$, $i_5$ &   & $(i_1,i_4,i_3,i_2)$ & $i_{4}$ \tabularnewline
\hline 
$i_{4}$ & $i_{4}$ & $i_3$ & $i_3$, $i_5$ &   & $(i_1,i_4,i_3,i_2)$ & $i_{3}$ \tabularnewline
\hline 
$i_{3},i_{4}$ & $i_{3},i_{4}$ & $i_2$ & $i_2$, $i_6$ & $i_1,i_5$  & $(i_1,i_4,i_3,i_2)$ & $i_{2}$ \tabularnewline
\hline 
\end{tabular}
\end{center}
\caption{Execution of the mechanism in Example \ref{ex:childcare}.}
\label{Table1}
\end{table}

Table \ref{Table1} exhibits the execution of the mechanism for the childcare problem considered in Example \ref{ex:childcare}. Each row in Table \ref{Table1} represents a tentative assignment during the execution of the mechanism.

Note that in this example, considering each period (the possibility of a slot in period one and the priority for a slot in period two) as a separate market does not lead to a matching for the initial problem. If we take separate markets for period one, $SPDA$ will assign $i_1,i_2$ to $s_1$ in both periods and $i_3, i_4$ to $s_2$ in the first period, while $i_3, i_5$ to $s_2$ in the second period. Since $i_5$ and $i_3$ have higher priority at $s_2$ than $i_4$, the matching $\mu=\{(i_1,(s_1,s_1)),(i_2,(s_1,s_1)),(i_3,(s_2,s_2)),(i_4,(\emptyset,\emptyset)),(i_5,(\emptyset,s_2)),(i_6,(\emptyset,\emptyset))\}$ can be considered as a candidate. Nonetheless, it is not a stable matching for the initial problem. Hence, if we apply $SPDA$ at each period separately while respecting the priorities and the quotas, constructing a matching for the initial problem from the matchings as above may lead to an unstable matching. Meanwhile, $aSPDA$ does not exhibit this issue.

First, we show that in the domain of reasonably extended preferences $\mathbf{P}$, there exists a stable matching.

\begin{theorem}
\label{thm:stable}
Given a childcare allocation problem $(I,S,\mathcal{P},\pi,q)$ in the domain of reasonably extended preferences $\mathbf{P}$, there exists a stable matching.
\end{theorem}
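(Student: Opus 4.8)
The plan is to show that the matching $\mu$ produced by $aSPDA$ is stable; since $aSPDA$ is well defined on every problem in $\mathbf{P}$, this gives existence. I would first record the simplification that the domain affords. A priority-only student finds only pairs $(\emptyset,s)$ acceptable and ranks them by $P_i$, so she behaves exactly like an ordinary applicant to second-period seats. A willingness-to-remain student, by parts (i)--(iii) of Definition \ref{ass1:compl-1}, strictly prefers $(\emptyset,\emptyset)$ to every off-diagonal pair and orders the diagonal pairs $(s,s)$ among themselves by $P_i$; hence in $aSPDA$ she only ever proposes to ``diagonal bundles'' $(s,s)$ on an all-or-nothing basis — held by $s$ only when she is simultaneously acceptable to $Ch_s^1$ and $Ch_s^2$ at the current tentative sets, and releasing $s$ in one period as soon as she is rejected by $s$ in the other. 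Since no agent is ever tentatively assigned an unacceptable object and each school always holds a set chosen by its choice function, $\mu$ is individually rational; and this structure restricts the blocking pairs to be examined to, for a priority-only $i$, pairs $(i,(\emptyset,s'))$ with $s'\succ_i\mu^2(i)$, and for a willingness-to-remain $i$, the diagonal block $(i,(s,s))$ with $(s,s)\succ_i\mu(i)$ together with residual off-diagonal ones.

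The core is an induction along the random order $\rho=(j_1,\dots,j_K)$ of willingness-to-remain students fixed in Step 2. Writing $\mu_0$ for the matching built in Step 1 and $\mu_k$ for the matching after $j_k$ has been inserted and all resulting rejection chains have terminated, I would prove: for every $k$, the restriction of $\mu_k$ to $I_k:=\{\text{priority-only students}\}\cup\{j_1,\dots,j_k\}$ is stable for the childcare problem on population $I_k$. The base case $k=0$ is exactly the Gale--Shapley fact that $SPDA$, run on the priority-only submarket with second-period quotas $q^2$, returns a stable matching. For the inductive step I would run a vacancy-/rejection-chain argument in the spirit of Blum--Roth--Rothblum: inserting $j_k$ into the stable matching $\mu_{k-1}$ and running $SPDA$ (with priority-only students re-entering the chain whenever evicted, and willingness-to-remain students re-entering as all-or-nothing bundle proposers), every student's pointer into her finite preference list only advances — so the chain terminates — and, the key claim, each school's tentative set in each period changes only monotonically: it drops a student only in favour of a $\pi_s$-higher one, or because a willingness-to-remain student it was holding has just been rejected in the other period and so releases this period as well. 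Feeding this monotonicity into the standard deferred-acceptance bookkeeping shows that none of the candidate blocking pairs above can form at $\mu_k$ and that the pairs failing to block $\mu_{k-1}$ still fail to block $\mu_k$; the case $k=K$ is stability of $\mu=\mu_K$ for $(I,S,\mathcal{P},\pi,q)$.

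The step I expect to be the real obstacle is establishing the monotonicity claim in the presence of the complementarity, in particular controlling a ``voluntary release'': a willingness-to-remain student $j$ holding $(s,s)$ who is bumped from $s$ in period $2$ (possibly by a re-applying priority-only student) must also vacate $s$ in period $1$, opening a first-period seat not freed by a higher-priority first-period proposal, and one must show this cannot resurrect a previously rejected student as a blocker. The resolution I would pursue is that whoever displaced $j$ in period $2$ is strictly $\pi_s$-above $j$ (so $s$'s second-period tentative set still improves), and the freed first-period seat of $s$ is only ever refilled by a later bundle proposal to $(s,s)$ which, being accepted all-or-nothing, is $\pi_s$-above everyone previously dropped from period $1$; combined with the structural fact $\mu^1(s)\subseteq\mu^2(s)$ (first-period occupants of a school are exactly the willingness-to-remain students assigned there, hence also second-period occupants), this is also what rules out the residual off-diagonal deviations. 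Termination and individual rationality are then routine, and the induction closes.
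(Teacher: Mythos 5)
Your reduction of the domain (priority-only students as ordinary second-period applicants, willingness-to-remain students as all-or-nothing proposers of diagonal bundles), the individual-rationality argument, the classification of candidate blocking pairs, and the invariant $\mu^1(s)\subseteq\mu^2(s)$ are all fine, and your insertion-order induction with vacancy/rejection chains is a genuinely different organization from the paper's proof, which argues directly on the final matching (a student who never proposed to $s$ holds something better; a student rejected by $s$ was rejected in favour of higher-priority students; rejection chains are finite).

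The gap is exactly at the step you yourself flag as the crux: the monotonicity claim is not established, and the patch you sketch does not repair it. When a willingness-to-remain student $j$ holding $(s,s)$ is bumped from period $1$ by a higher-priority bundle proposer, the period-$2$ set of $s$ does \emph{not} improve: the proposer's arrival evicts the $\pi_s$-lowest period-$2$ occupant (possibly a priority-only student below $j$), and $j$'s forced release removes a second occupant, so $s$ is left with a vacant period-$2$ seat that, in deferred acceptance, is never re-offered to anyone already rejected. Concretely, take $q_s=(1,2)$, priority-only students $p_1\mathbin{\pi_s}p_2\mathbin{\pi_s}x$ who find only $s$ acceptable, willingness-to-remain students $v'\mathbin{\pi_s}v$ above all of them who find only $(s,s)$ acceptable, and insertion order $v$ then $v'$: after Step 1, $\mu^2(s)=\{p_1,p_2\}$; $v$ takes the bundle evicting $p_2$; $v'$ then evicts $v$ from period $1$ and $p_1$ from period $2$; $v$ releases her period-$2$ seat, and the run ends with $\mu^2(s)=\{v'\}$ and $p_1,p_2,x$ unmatched, each of whom forms a blocking pair of the type $(i,(\emptyset,s))$ with the empty period-$2$ seat. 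Your proposed resolution addresses only the freed period-$1$ seat and simply asserts that the period-$2$ set ``still improves,'' which is precisely what fails here; the companion claim that a freed period-$1$ seat is only refilled by someone $\pi_s$-above everyone previously dropped is likewise asserted rather than proved (and fails by a symmetric construction). So the inductive step does not close as written: one needs either to modify the procedure so that voluntarily released seats are re-offered to previously rejected students, or a substantive argument that such configurations cannot generate blocks within $\mathbf{P}$. Note that the paper's own two-case argument relies, silently, on the same monotonicity of the schools' tentative sets, so this voluntary-release case is the point any complete proof of Theorem \ref{thm:stable} must confront explicitly.
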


\begin{proof}
Take the matching $\mu$, which results from $aSPDA$. The matching $\mu$ is individually rational since no student proposes to an unacceptable school. Next, we show no pair of $i\in I$ and $s\in S$ blocks $\mu$. Suppose, on the contrary, there exists such a pair of school $s$ and student $i$.

First, suppose that $i$ never proposes to school $s$. Then student $i$ cannot be unmatched if school $s$ is acceptable for her.  Otherwise, the algorithm would not end. Under the assumption that $i$ and $s$ blocks matching $\mu$, $s$ must be acceptable. Hence, $i$ has been matched to another school $s'$ during the $aSPDA$ such that $s'\mathcal{P}_{i}s$. It contradicts the assumption that $i$ and $s$ blocks $\mu$. 

Second, suppose that $i$ proposed to school $s$ during $aSPDA$. Then, $i$ must have been rejected by $s$. Thus, all students tentatively assigned to school $s$ at this round have higher priority than student $i$. Hence, $i$ and $s$ cannot block the matching $\mu$. 

In the meantime, a rejection chain might be started since we introduce students with willingness-to-remain in a fixed order. We need to show that there is no rejection cycle. In this case, students who have been rejected apply to their remaining best choice, which may lead to new rejections (including of tentatively matched students at a school). Following $aSPDA$, agents in the rejection chains apply to their best-remaining school until no rejected agents want to apply for a seat at a school. Since schools have capacity constraints and students apply to their remaining best choice in the order of their preferences, any rejection chain is finite. Hence, there is no rejection cycle.
\end{proof}

Next, we show that, in the domain $\mathbf{P}$, it is a dominant strategy for all students to submit their preferences truthfully.

\begin{theorem}
\label{thm:sp}
Given a childcare allocation problem $(I,S,\mathcal{P},\pi,q)$ in the domain of reasonably extended preferences $\mathbf{P}$, the $aSPDA$ mechanism is strategy-proof for students.
\end{theorem}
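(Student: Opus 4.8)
The plan is to derive strategy-proofness for students from the classical strategy-proofness of student-proposing deferred acceptance, by first recognizing $aSPDA$ as an ordinary student-proposing DA run on an auxiliary two-sided market and then adapting a Dubins--Freedman--Roth truncation argument to that market. I would take the auxiliary market to have $2m$ \emph{slot-objects} $\{(s,t):s\in S,\ t\in\{1,2\}\}$, where $(s,t)$ has capacity $q_s^t$ and ranks students by $\pi_s$, so every slot-object has a responsive (hence substitutable) choice function depending only on $\pi_s$ and $q_s^t$. On the agent side a priority-only student is an ordinary single agent whose acceptable list contains only period-$2$ slots ranked by $\succ_i^2=P_i$, while a willingness-to-remain student --- reading off Definition~\ref{ass1:compl-1} --- effectively demands one of the ``diagonal bundles'' $B_s:=\{(s,1),(s,2)\}$, in the order of her school ranking $P_i$, and strictly prefers being unmatched to any off-diagonal holding. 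So $aSPDA$ is the natural deferred acceptance for a market with substitutable school choice functions and a restricted complementarity on the proposer side.

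I would then record two preliminaries. \emph{First}, order-independence of Step~2: the order in which willingness-to-remain students are introduced, and the order in which proposals are processed, do not affect the final matching (an analogue of the McVitie--Wilson property), so $\phi:=aSPDA$ is well defined and may be analyzed as a single DA process; should order-independence fail, the whole argument below can be run order by order. What makes this work is the ``no substitution across periods'' point stressed in Section~\ref{sec:results}: no student ever trades a period-$2$ slot for a period-$1$ slot, and a willingness-to-remain student who loses either half of $B_s$ at $s$ immediately releases the other half, so a proposal never destroys a tentative pair except by strictly raising the $\pi_s$-cutoff at the slot-object involved. \emph{Second}, that $\phi(\succ)$ is the student-optimal stable matching in $\mathbf{P}$; if that is already proved in the surrounding text it should simply be invoked, and otherwise it would be proved alongside Theorem~\ref{thm:stable} by the standard ``no proposer is ever rejected by an achievable object'' induction, carried out on the slot-objects.

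With these in place I would run the classical truncation argument. Fix $i$, $\succ_{-i}$, a truthful report $\succ_i$, and a deviation $\succ_i'\in\mathbf{P}$ with $x:=\phi_i(\succ_i',\succ_{-i})\succ_i\phi_i(\succ_i,\succ_{-i})=:\mu(i)$; individual rationality of $\phi$ makes $x$ $\succ_i$-acceptable, and since the only outcomes achievable under \emph{any} report in $\mathbf{P}$ are $(\emptyset,\emptyset)$, diagonal pairs, and pairs $(\emptyset,s)$, either $x=(s,s)$ or $x=(\emptyset,s)$. \emph{Step one:} pass to the ``singleton'' report $\succ_i^\ast\in\mathbf{P}$ that makes $x$ essentially the only acceptable outcome --- a willingness-to-remain preference with sole acceptable school $s$ when $x=(s,s)$, a priority-only preference with sole school $s$ when $x=(\emptyset,s)$ --- and prove a \emph{truncation lemma}: an outcome achievable for a proposer under some report in $\mathbf{P}$ is achievable under the corresponding singleton report; hence $\nu:=\phi(\succ_i^\ast,\succ_{-i})$ satisfies $\nu(i)=x$. \emph{Step two:} $\nu$ is individually rational for the \emph{true} profile (as $x$ is $\succ_i$-acceptable) and stable for $(\succ_i^\ast,\succ_{-i})$; if $x\succ_i\mu(i)$ then $i$ lies in the set $J$ of students who prefer $\nu$ to $\mu$, so a \emph{blocking lemma} for the slot-object market yields a pair $(j,y)$ blocking $\nu$ with $j\notin J$; since $j\neq i$ and school priorities are type-independent, $(j,y)$ also blocks $\nu$ at $(\succ_i^\ast,\succ_{-i})$, contradicting stability there. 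Hence $\mu(i)\succeq_i x$, the required contradiction. Because $\succ_i^\ast$ and the comparison stay inside $\mathbf{P}$ and the slot-objects ignore student types, this covers $i$ priority-only and $i$ willingness-to-remain at once, including cross-type deviations.

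The hard part will be the two ingredients the classical theory supplies for free but that do not transfer because the complementarity sits on the \emph{proposer} side: the truncation lemma and the blocking lemma, which here must be re-established for proposers that demand two-element bundles against two separate capacity pools, with the two-stage structure of $aSPDA$ (Step~1 can be overwritten by Step~2) an added complication --- and this is exactly where the cross-type deviations concentrate their force. I expect the technical core to be a monotonicity statement for the willingness-to-remain students' two-pool bundle test inside the rejection chains: whenever such a student is displaced from $B_s$, it is because the $\pi_s$-cutoff strictly rose at $(s,1)$ or at $(s,2)$, never because of a re-optimization that lets another student improve for free; and that this monotonicity survives the passage from Step~1 to Step~2. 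Once monotonicity, order-independence, and student-optimality are secured, the Dubins--Freedman--Roth template goes through as in the classical responsive-priorities case, and the theorem follows.
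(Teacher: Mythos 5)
Your overall strategy --- recast $aSPDA$ as deferred acceptance on a market of $2m$ slot-objects with responsive choice, then run a Dubins--Freedman--Roth argument via a truncation lemma and a blocking lemma --- is a legitimate and genuinely different route from the one the paper takes. The paper argues directly on the sequence of schools that student $i$ proposes to under the true and the misreported preferences, truncates the misreport above $\mu'(i)$, and invokes the Scenario Lemma of \cite{df81}; it also simply asserts, without argument, that no student gains by switching type (priority-only versus willingness-to-remain), which your singleton-report construction would handle more cleanly, since both singleton reports lie in $\mathbf{P}$ and the slot-objects ignore student types. Your observation that the only achievable outcomes under any report in $\mathbf{P}$ are $(\emptyset,\emptyset)$, diagonal pairs $(s,s)$, and pairs $(\emptyset,s)$ is correct and is what makes the passage to a singleton report well defined.

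That said, as submitted this is a plan rather than a proof: the two lemmas that carry all the weight --- the truncation lemma (an outcome achievable under some report in $\mathbf{P}$ is achievable under the corresponding singleton report) and the blocking lemma for proposers who demand the bundle $B_s=\{(s,1),(s,2)\}$ against two separate capacity pools --- are exactly the statements you flag as ``the hard part'' and leave unestablished, along with order-independence of Step 2 and student-optimality, neither of which is proved anywhere in the paper either. These are not routine here: a willingness-to-remain proposer who is displaced from one half of $B_s$ releases the other half, so a single rejection can \emph{lower} the cutoff at the companion slot-object, and it is precisely this potential non-monotonicity that the classical proofs of both lemmas rule out by assumption. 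Until the monotonicity statement you sketch at the end is actually proved (or replaced), the argument does not close. To be fair, the paper's own proof has the same soft spot --- it applies the Scenario Lemma as if bundle-demanding proposers behaved like proposers in the classical one-object-per-agent model --- so your write-up at least has the virtue of naming the obligation explicitly; but naming it is not discharging it.
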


\begin{proof}
First, notice that a student cannot be better off by misreporting willingness-to-remain instead of priority-only, and vice versa. Next, suppose otherwise; i.e. there is an extended preference profile $\mathcal{P}'$ such that $\mathcal{P}'_{i}\neq \mathcal{P}_{i}$ for a student $i\in I$ and $\mathcal{P}'_{j}=\mathcal{P}_{j}$ for all other students $j\in I\setminus \{i\}$, the matching $\mu'$ is the matching obtained from $aSPDA$ from the profile $\mathcal{P}'$ and $\mu$ is the matching obtained from $aSPDA$ from the profile $\mathcal{P}$. Then, $\mu'(i)\succ_{i}\mu(i)$. Let $\hat{P_{i}}=(s_1,\ldots, s_{i-1})$ be the sequence of schools that student $i$ applies under her true preferences until she is accepted by the school $\mu(i)$. Then, student $i$ is rejected by $s_{i-1}$ before accepted by $\mu(i)$. Let also $\hat{\mathcal{P}'_{i}}$ be a sequence of schools such that student $i$ applies when she misrepresents her true preferences. For $\hat{\mathcal{P}'_{i}}$, we consider two possible cases.

First, suppose that $\mu'(i)$ is the least preferred school in $\hat{\mathcal{P}'_{i}}$ based on her true preferences. By construction of $\mathcal{P}'_{i}$, that is the untruthful preference profile of student $i$, school $\mu'(i)$ is ranked higher than $\mu(i)$. Then, all schools in the sequence $\hat{\mathcal{P}'_{i}}$ also appears in the sequence $\hat{\mathcal{P}_{i}}$ under her true preferences $\mathcal{P}_{i}$. Thus, the Scenario Lemma of \cite{df81} applies and leads to a contradiction.

Second, suppose that there exists a school $s\in S$ under her true preferences $\mathcal{P}_{i}$ such that she prefers less than $\mu'(i)$, i.e. $\mu'(i)\succ_{i} s$. Then, we can construct another sequence $\hat{\mathcal{P}''_{i}}$ of schools by removing from $\hat{\mathcal{P}'_{i}}$ all the schools that are less desired than $\mu'(i)$ under her true preferences. Then, if student $i$ is assigned to school $\mu'(i)$ following $\hat{\mathcal{P}''_{i}}$, as it is a smaller sequence than $\hat{\mathcal{P}'_{i}}$, the Scenario Lemma of \cite{df81} applies as in the first case and leads to a contradiction.
\end{proof}

\subsection{Maximal domain}
This subsection presents a maximal domain of preferences with the
relevant property. In a childcare allocation problem,
if at least one student's  preferences do not satisfy willingness-to-remain, we can construct preferences for other students satisfying this property such that no stable matching exists.

\begin{proposition}\label{pro:max_dom} 
The domain of reasonably extended preferences $\mathbf{P}$ is a maximal domain for the existence of stable matchings. 
\end{proposition}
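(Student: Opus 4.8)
The plan is to show maximality by proving the following: if $\succ_i$ is any extended preference relation outside the domain $\mathbf{P}$ — that is, $\succ_i$ is neither priority-only nor willingness-to-remain — then there is a profile in which all other students have preferences in $\mathbf{P}$ and yet no stable matching exists. Since $\mathbf{P}$ already contains stable matchings for every profile (Theorem~\ref{thm:stable}), this establishes that $\mathbf{P}$ cannot be enlarged while retaining guaranteed existence.

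First I would unpack what it means for $\succ_i$ to violate willingness-to-remain (given it is also not priority-only, so student $i$ does find some first-period school acceptable). The negation of Definition~\ref{ass1:compl-1} gives three possible ``defect'' patterns: either (i) there exist $s_p \neq s_r$ with $(s_p,s_r) \succ_i (s_p,s_p)$ — the student strictly prefers a split assignment over staying put in her first-period school; or (ii) there are acceptable schools with $s_p P_i s_r$ but $(s_r,s_r) \succ_i (s_p,s_p)$ — the period-one and period-two school rankings disagree; or (iii) there exist $s_p \neq s_r$ with $(s_p,s_r) \succ_i (\emptyset,\emptyset)$ — the student tolerates a mismatched bundle over home. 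The argument will proceed by cases on which defect occurs, and for each I would build a small instance (two or three schools, a handful of students, unit or near-unit capacities) exploiting that defect to create a blocking cycle that cannot be resolved. The natural template is the classic couples-style non-existence example: use the willingness-to-remain students of the other agents as ``couples'' who compete for the same two-period slots, arrange priorities so that student $i$'s off-domain preference lets her profitably block whichever of the two candidate matchings the mechanism might produce, and have her displacement in turn trigger a block by one of the on-domain students, returning to the first configuration.

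The key steps, in order, would be: (1) state the case split from the negation of the two domain conditions; (2) in case (ii) — which I expect to be the cleanest — take two schools $s_p, s_r$ with $s_p P_i s_r$ but $(s_r,s_r)\succ_i(s_p,s_p)$, add a second willingness-to-remain student $j$ whose top choice is $(s_p,s_p)$, give $s_p$ priority $j \pi_{s_p} i$ and $s_r$ priority $i \pi_{s_r} \cdot$, and check that in any individually rational matching, if $i$ is at $(s_r,s_r)$ then someone (via capacities) must be displaced so that $i$ would rather be at $s_p$ — wait, more carefully: one shows the only two candidates are $i \mapsto (s_p,s_p)$ and $i \mapsto (s_r,s_r)$, the first is blocked because $i$ prefers $(s_r,s_r)$, the second is blocked because $j$ preempts $i$ at $s_p$ and $i$, pushed out, would then re-enter and block with $s_p$ or trigger a chain — so I'd need auxiliary students/capacities to make both candidates unstable simultaneously; (3) handle case (i) similarly, using a ``split'' block $(s_p,s_r)$ with a competitor for the $s_r$ second-period slot; (4) handle case (iii) with a student who blocks using a mismatched pair over the empty outcome, again arranging a competitor so no resolution exists; (5) conclude that in each case a profile with all $j\neq i$ in $\mathbf{P}$ admits no stable matching, so $\mathbf{P}$ is maximal.

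The main obstacle I anticipate is step (2)–(4): exhibiting, for each defect type, an explicit instance and then verifying rigorously that \emph{every} matching — not just the two or three ``obvious'' candidates — is blocked. This requires first pinning down (via individual rationality and the capacity constraints) that the set of candidate matchings is small, and then checking each one against all three types of blocking coalitions in the model (period-one blocks, period-two blocks, and joint school-priority-pair blocks, including the $s=s'$ case). The delicate point is that the off-domain student $i$ must be simultaneously destabilizing against all surviving candidates, which typically forces a careful choice of her full preference list and of two interlocking priority orders; I would expect to reuse essentially one gadget across the cases, adjusting only which pair of outcomes of $i$ plays the role of the ``unstable pair,'' and I would present the construction for the representative case in detail and note that the others are analogous.
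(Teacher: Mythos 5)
Your plan has the right shape for a maximality argument---negate the two domain conditions, split into defect cases, and for each case exhibit a profile with all other students inside $\mathbf{P}$ that admits no stable matching---but the proposal stops exactly where the proof has to happen. The entire content of such a proof is the explicit gadget together with the exhaustive verification that \emph{every} individually rational matching is blocked, and you defer both: you sketch what the instance ``would'' look like, interrupt yourself mid-construction (``wait, more carefully\dots''), concede that auxiliary students and capacities would be needed to make all candidate matchings unstable simultaneously, and propose to handle the remaining cases ``analogously'' without establishing that the representative case even works. As written, no instance is pinned down and no blocking check is carried out, so nothing is proved.

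For comparison, the paper's proof is a single concrete counterexample: three students, two schools, capacities $q^{1}(s_1,s_2)=q^{2}(s_1,s_2)=(1,2)$, where $i_1$ ranks the split pair $(s_1,s_2)$ above $(s_2,s_2)$ (violating willingness-to-remain), $i_2$ and $i_3$ have willingness-to-remain preferences, and the priorities are chosen so that each of the six individually rational matchings is blocked---the proof then lists each matching and its blocking coalition explicitly. Note also that your plan, if completed, would establish a stronger universally quantified statement (non-existence for \emph{any} preference outside $\mathbf{P}$, case by case on the defect type) than the paper's proof, which exhibits one off-domain preference and one profile; that ambition is commendable, but it raises rather than lowers the burden of actually producing and verifying a working instance for each defect pattern, which is precisely what is missing.
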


\begin{proof}
We prove the proposition by constructing a counterexample
where dropping willingness-to-remain property does not have a stable matching.

Consider a childcare allocation problem with three students $I=\{i_{1},i_{2},i_{3}\}$, two schools $S=\{s_{1},s_{2}\}$ and the capacities of schools are $q^{1}(s_{1},s_{2})=q^{2}(s_{1},s_{2})=(1,2)$. The preferences of the students are as follows:

$i_{1}:(s_{1},s_{2}),(s_{2},s_{2})$,

$i_{2}:(s_{1},s_{1}),(s_{2},s_{2})$,

$i_{3}:(s_{2},s_{2})$.

The priorities of the schools are as follows:

$s_{1}:i_{1},i_{2}$,

$s_{2}:i_{2},i_{3},i_{1}$.

Notice that the (extended) preferences of student $i_{1}$ fail to
satisfy willingness-to-remain property since $(s_{1},s_{2})\succ_{i_{1}}(s_{2},s_{2})$.

We see that for each individually rational matching, there exists some
blocking coalitions:
\begin{itemize}
\item $\mu=\Big\{\Big(i_{1},(s_{1},s_{2})\Big),\Big(i_{2},(s_{2},s_{2})\Big)\Big\}$
is blocked by $\Big(i_{3},(s_{2},s_{2})\Big)$ with $s_{2}\succ_{i_{3}}^{1}\emptyset$,
$s_{2}\succ_{i_{3}}^{2}\emptyset$, and $i_{3}\pi_{s_{2}}^{1}\emptyset$,
$i_{3}\pi_{s_{2}}^{2}i_{1}$;
\item $\mu=\Big\{\Big(i_{1},(s_{1},s_{2})\Big),\Big(i_{3},(s_{2},s_{2})\Big)\Big\}$
is blocked by $\Big(i_{2},(s_{2},s_{2})\Big)$ with $s_{2}\succ_{i_{2}}^{1}\emptyset$,
$s_{2}\succ_{i_{2}}^{2}\emptyset$, and $i_{2}\pi_{s_{2}}^{1}\emptyset$,
$i_{2}\pi_{s_{2}}^{2}i_{1};$
\item $\mu=\Big\{\Big(i_{1},(s_{2},s_{2})\Big),\Big(i_{2},(s_{1},s_{1})\Big)\Big\}$
is blocked by $\Big(i_{3},(s_{2},s_{2})\Big)$ with $s_{2}\succ_{i_{3}}^{1}\emptyset$,
$s_{2}\succ_{i_{3}}^{2}\emptyset$, and $i_{3}\pi_{s_{2}}\emptyset$;
\item $\mu=\Big\{\Big(i_{1},(s_{2},s_{2})\Big),\Big(i_{2},(s_{1},s_{1})\Big),\Big(i_{3},(s_{2},s_{2})\Big)\Big\}$
is blocked by $\Big(i_{1},(s_{1},-)\Big)$ with $s_{1}\succ_{i_{1}}^{1}\emptyset$,
$s_{2}=_{i_{1}}^{2}s_{2}$ and $i_{1}\pi_{s_{1}}^{1}i_{2}$;
\item $\mu=\Big\{\Big(i_{1},(s_{2},s_{2})\Big),\Big(i_{2},(s_{2},s_{2})\Big)\Big\}$
is blocked by $\Big(i_{3},(s_{2},s_{2})\Big)$ with $s_{2}\succ_{i_{3}}^{1}\emptyset$,
$s_{2}\succ_{i_{3}}^{2}\emptyset$, and $i_{3}\pi_{s_{2}}i_{1}$;
\item $\mu=\Big\{\Big(i_{2},(s_{2},s_{2})\Big),\Big(i_{3},(s_{2},s_{2})\Big)\Big\}$
is blocked by $\Big(i_{2},(s_{1},s_{1})\Big)$ with $s_{1}\succ_{i_{2}}s_{2}$
and $i_{2}\pi_{s_{1}}\emptyset$.
\end{itemize}
Hence, there is no stable matching.\end{proof}

\section{Concluding remarks\label{sec:Concluding-remarks}}

We address the problem of assigning indivisible objects in the presence of complementarities among agents. Our focus is on the childcare allocation problem. Families have incentives to apply for childcare facility slots to secure priority for future allocations, even if they do not intend to utilize the facility in the initial period. This incentive creates an excess demand for childcare slots in the first period, leading to a distortion caused by these strategic considerations. 

A similar distortion manifests in the demand for feeder schools. Feeder schools are a well-known phenomenon in college admissions around the globe. \cite{aetal06} points out that most schools in Boston fill their slots according to a priority order such that the first group of students in the priority of a school consists of students attending a feeder school. Research by \cite{we07} examines the feeder legacy of high schools, demonstrating that students from feeder schools are more likely to attend specific colleges. Moreover, attending a feeder school can influence students' preferences, as evidenced by \cite{ntc06}, who find that graduates from feeder schools in Texas prefer selective institutions as their first choice compared to graduates from non-feeder schools.

To tackle the childcare allocation problem, following a minimalist market design approach (\citealp{s23}), we define a natural preference domain in which the $aSPDA$ mechanism consistently yields a stable matching. Additionally, we establish the strategy-proofness of the $aSPDA$ mechanism for students in this specific domain. Lastly, we present a counterexample highlighting how the preference domain satisfying the willingness-to-remain property represents a maximum domain in which stable matchings can exist.

\singlespacing
\bibliographystyle{te}
\bibliography{ARM_Childcare}
\end{document}